\documentclass[12pt]{article}
\setlength{\voffset}{-.75truein}
\setlength{\textheight}{9truein}
\setlength{\textwidth}{6.9truein}
\setlength{\hoffset}{-.9truein}
\newcommand{\version}{\today}

\usepackage{amsthm,amsfonts,amsmath, amscd}

\swapnumbers
                              

\pagestyle{myheadings}

\theoremstyle{plain}
\newtheorem{thm}{THEOREM}[section]

\newtheorem{cl}[thm]{COROLLARY}

\theoremstyle{definition}
\newtheorem{defi}[thm]{DEFINITION}

\theoremstyle{remark}
\newcommand{\upchi}{\raise1pt\hbox{$\chi$}}
\newcommand{\R}{{\mathord{\mathbb R}}}

\newcommand{\hn}{{\mathord{\widehat{n}}}}

\newcommand{\tr}{{\rm Tr}}
\renewcommand{\|}{{\Vert}}
\numberwithin{equation}{section}
\pagestyle{myheadings} \sloppy

\newcommand{\un}{{\rm 1\kern -2.5pt l}}



\begin{document}
\markboth{\scriptsize{CL \version}}{\scriptsize{CL February 20, 2015}}
\def\mn{{\bf M}_n}
\def\hn{{\bf H}_n}
\def\hnp{{\bf H}_n^+}
\def\hmnp{{\bf H}_{mn}^+}
\def\H{{\mathcal H}}
\title{{\sc A remainder term for H\"older's inequality for matrices and quantum entropy inequalities}}

\author{\vspace{5pt} Eric A. Carlen$^1$\\
\vspace{5pt}\small{$1.$ Department of Mathematics, Hill Center,}\\[-6pt]
\small{Rutgers University,
110 Frelinghuysen Road
Piscataway NJ 08854-8019 USA}\\
}
\date{\version}
\maketitle 
\footnotetext                                                                         
[1]{Work partially
supported by U.S. National Science Foundation
grant DMS 1501007.   }

\begin{abstract}
We prove a sharp remainder term for H\"older's inequality for traces as a consequence of the uniform convexity properties of the 
Schatten trace norms. We then show how this implies 
a novel family of Pinsker type bounds for the quantum Renyi entropy. Finally, we show how the sharp form of
the usual quantum Pinsker inequality for relative entropy may be obtained as a fairly direct consequence of uniform convexity.

\end{abstract}

\medskip
\leftline{\footnotesize{\qquad Mathematics subject
classification numbers: 26B25, 94A17}}
\leftline{\footnotesize{\qquad Key Words: density matrix, entropy, uniform convexity}}

\medskip


\maketitle

\section{Introduction}

For any $n \times n$ matrix $A$,  define
$|A| = (A^*A)^{1/2}$,
and for $1 \leq p < \infty$,
${\displaystyle \|A\|_p = \left(\tr |A|^p\right)^{1/p}}$.
If $\sigma_1 \geq \dots \geq \sigma_n$ are the singular values of $A$, then 
$\|A\|_p = \left(\sum_{j=1}^n \sigma_j^p\right)^{1/p}$.
For $p=\infty$, $\|A\|_\infty$ is simply the operator norm of $A$, which is also the largest singular value of $A$. 
It is well-known that $\|\cdot \|_p$ is a norm, the Schatten $p$ norm,  on $M_n$, the space of $n\times n$ matrices. 
The space $C_p$ is the space $M_n$  of $n\times n$ complex matrices equipped with this norm. 

The Schatten  norms are in many ways close analogs of the $\ell_p$ norms. In particular, one has the analog of H\"older's inequality
$$\left | \tr [AB]\right| \leq \|A\|_p\|B\|_{p'}$$
where $1/p+1/p' =1$. Whenever $p$ and $p'$ appear together below, it is assumed that $1/p+1/p' =1$.
For all $1\leq p \leq 1$, a simple argument using the singular value decomposition shows that 
\begin{equation}\label{dual}
\|A\|_p = \sup\left\{ \Re\left(\tr[AB]\right)\ :\ \|B\|_{p'} = 1 \right\}\ ,
\end{equation}
and in fact, the supremum is achieved. 
The Minkowski inequality; i.e., the fact that our norms are norms, follows in the usual way. For $1< p < \infty$,
and non-zero $A \in C_p$, define
\begin{equation}\label{dual1}
\mathcal{D}_p(A) = \|A\|_p^{1-p}|A|^{p-1}U^*
\end{equation}
where $A = U|A|$ is the polar decomposition of $A$.  Then one readily checks that
\begin{equation}\label{dual2}
\|\mathcal{D}_p(A)\|_{p'} =1 \qquad{\rm and}\qquad \tr[\mathcal{D}_p(A)A] = \|A\|_p\ .
\end{equation}
Thus, for $1< p < \infty$, the supremum in (\ref{dual}) is a maximum, and the maximum is attained at $\mathcal{D}_p(A)$.  It then follows from (\ref{dual}) that for all $A,B\in M_n$ and all $t\in \R$,
$$\|A+t B\|_p \leq \tr[\mathcal{D}_p(A)(A+tB)]  =  \|A\|_p + t\,\tr[\mathcal{D}_p(A)] B \ .$$
Likewise, writing $A = (A+tB) - tB$,
$$\|A\|_p \leq \tr[\mathcal{D}_p(A+tB)(A)]  =  \|A +tB\|_p - t\,\tr[\mathcal{D}_p(A+tB)] B \ .$$
Thus, provided that $t\|\mathcal{D}_p(A+tB) -\mathcal{D}_p(A)\|_p = o(|t|)$,  we have that 
$$\big|\|A+tB\|_p  - \|A\|_p - t\,\tr[\mathcal{D}_p(A) B]\big| = o(|t|)\ ,$$
and this says that the norm function $A \mapsto \|A\|_p$ is (Fr\'echet) differentiable for $1< p < \infty$, and
that $\mathcal{D}_p(A)$ is the derivative at $A\in M_n$.  In fact, for $1< p < \infty$, the map $A \mapsto 
\mathcal{D}_p(A)$ is H\"older continuous, and the modulus of continuity has been given in \cite{CL}. Thus, $A\mapsto 
\mathcal{D}_p(A)$ is the gradient of the norm function $A\mapsto \|A\|_p$ for $1 < p < \infty$, and this is the reason for the notation
using $\mathcal{D}$. 

The map $A \mapsto \mathcal{D}_p(A)$ is closely related to the {\em non-commutative Mazur map} studied in \cite{AP} an \cite{ER}.
For $1\leq p,q\leq \infty$, the {\em Mazur map} $\mathcal{M}_{p,q}$ is defined on $M_n$ by $\mathcal{M}_{p,q}(A) = A|A|^{(p-q)/q}$. 
For $q=p'$, $(p-q)/q = p-2$, and hence
$$\mathcal{M}_{p,p'}(A) = \|A\|_p^{p-1}(\mathcal{D}_p(A))^*\ .$$
Sharp H\"older continuity bounds on $\mathcal{M}_{p,q}$ in a very general von Neumann algebra setting are proved in \cite{ER}, which can be consulted for further references. 

The norm gradient maps, which are the Mazur maps for $q=p'$, normalized to be homogeneous of degree one, are the focus of this note which concerns another setting in which they arise. Our first result is a quantitative remainder term for the tracial H\"older inequality. From this we shall deduce several quantum entropy inequalities.  

The next theorem is a non-commutative analog of a theorem proved in \cite{CFL} in the commutative context of $L_p$ spaces for Lebesgue integration. The proof simply uses the sharp uniform convexity properties of the $C_p$ norms proved in \cite{BCL} in place of the corresponding sharp uniform convexity properties of the $C_p$ norms that were used in \cite{CFL}.

\begin{thm}[H\"older's inequality with remainder]\label{holder}
Let $1 < p\leq 2$. Let $A$ be a unit vector in $C_p$, and let $B$ be a unit vector in $C_{p'}$. 
Let $\theta \in [0,2\pi)$ be chosen such that  $e^{i\theta}\tr[AB]$ is non-negative. Then 
we have both
\begin{equation}\label{main1}
\left|\tr[AB]\right| \leq 1 -  \frac{p-1}{4}\|{\mathcal D}_{p'}(B) -e^{i\theta}A\|_{p}^2\ ,
\end{equation}
and  
\begin{equation}\label{main2}
\left|\tr[AB]\right| \leq 1 -  \frac{1}{p'\ 2^{p'-1}}\| e^{i\theta}B - {\mathcal D}_{p}(A)\|_{p'}^{p'}\ .
\end{equation}
The exponents $2$ and $p'$ on the right sides of \eqref{main1} and \eqref{main2} are best possible. 
\end{thm}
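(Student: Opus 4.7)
The plan is to port the $L_p$ argument of \cite{CFL} to the Schatten setting by substituting the sharp uniform convexity inequalities for the $C_p$ norms established in \cite{BCL} for their $L_p$ counterparts. Both \eqref{main1} and \eqref{main2} follow a common template: lower bound a midpoint norm via the duality identity \eqref{dual}, upper bound it by the sharp BCL uniform convexity inequality, and rearrange. Throughout, set $t:=|\tr[AB]|$; note that H\"older gives $t\leq 1$, and that by \eqref{dual2} both $\mathcal{D}_p(A)$ and $\mathcal{D}_{p'}(B)$ are unit vectors, of $C_{p'}$ and $C_p$ respectively.

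For \eqref{main1}, take $X:=e^{i\theta}A$ and $Y:=\mathcal{D}_{p'}(B)$, both unit vectors in $C_p$. The sharp $2$-uniform convexity of $C_p$ for $1<p\leq 2$ from \cite{BCL} yields
\[
\left\|\frac{X+Y}{2}\right\|_p^{2}+(p-1)\left\|\frac{X-Y}{2}\right\|_p^{2}\leq 1.
\]
By the choice of $\theta$, the duality \eqref{dual} applied with the unit test vector $B\in C_{p'}$ gives
\[
\left\|\frac{X+Y}{2}\right\|_p\;\geq\;\Re\tr\!\left[\frac{X+Y}{2}\,B\right]\;=\;\frac{e^{i\theta}\tr[AB]+\tr[\mathcal{D}_{p'}(B)B]}{2}\;=\;\frac{1+t}{2}.
\]
Squaring, combining with the uniform convexity inequality, and noting $X-Y=-(\mathcal{D}_{p'}(B)-e^{i\theta}A)$ gives $\left(\tfrac{1+t}{2}\right)^{2}+\tfrac{p-1}{4}\|\mathcal{D}_{p'}(B)-e^{i\theta}A\|_p^{2}\leq 1$. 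The elementary identity $1-\left(\tfrac{1+t}{2}\right)^{2}=\tfrac{(1-t)(3+t)}{4}\leq 1-t$, valid since $t\leq 1$ and hence $3+t\leq 4$, then produces \eqref{main1}.

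For \eqref{main2}, I would dualize the roles: take $X:=e^{i\theta}B$ and $Y:=\mathcal{D}_p(A)$, both unit vectors in $C_{p'}$. Since $p'\geq 2$, the corresponding sharp Clarkson-type inequality of \cite{BCL} supplies
\[
\left\|\frac{X+Y}{2}\right\|_{p'}^{p'}+\left\|\frac{X-Y}{2}\right\|_{p'}^{p'}\leq 1,
\]
while the duality \eqref{dual} with the unit test vector $A\in C_p$ gives $\|(X+Y)/2\|_{p'}\geq (1+t)/2$ by the same trace computation as above. Raising this lower bound to the $p'$-th power and combining yields $\left(\tfrac{1+t}{2}\right)^{p'}+\tfrac{1}{2^{p'}}\|e^{i\theta}B-\mathcal{D}_p(A)\|_{p'}^{p'}\leq 1$. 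Finally, Bernoulli's inequality $(1-x)^{p'}\geq 1-p'x$ (valid for $x\in[0,1]$ and $p'\geq 1$) applied with $x=\tfrac{1-t}{2}$ gives $\left(\tfrac{1+t}{2}\right)^{p'}\geq 1-\tfrac{p'(1-t)}{2}$, and substituting this into the previous display produces \eqref{main2} with the asserted constant $1/(p'\,2^{p'-1})$.

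For the sharpness of the exponents, one may test on pairs of commuting diagonal matrices, which reduces the question to the commutative $L_p$ case already handled in \cite{CFL}: a Taylor expansion around an extremal configuration (for example normalized perturbations of $\mathrm{diag}(1,0)$) shows that no larger exponent than $2$ in \eqref{main1} or than $p'$ in \eqref{main2} is admissible. The essential ingredient, and the place where any genuine obstacle would lie, is the use of the \emph{sharp} BCL constants in the uniform convexity inequalities; weaker Clarkson-type estimates would only give suboptimal constants in \eqref{main1}--\eqref{main2}. The small algebraic steps above, in particular the Bernoulli inequality applied to $(1+t)/2$, are what is needed to propagate sharpness from the BCL inputs to the final statement.
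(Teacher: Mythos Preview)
Your argument is correct and is essentially the same as the paper's: in both cases one bounds $\|( \mathcal{D}_{p'}(B)+e^{i\theta}A)/2\|_p$ (resp.\ $\|(e^{i\theta}B+\mathcal{D}_p(A))/2\|_{p'}$) from below by $(1+t)/2$ via duality and from above by the sharp BCL uniform convexity bound, then rearranges. The only cosmetic difference is that the paper quotes the BCL bounds already in the ``modulus of convexity'' form $\|(X+Y)/2\|_p\le 1-\tfrac{p-1}{2}\|(X-Y)/2\|_p^2$ and $\|(X+Y)/2\|_{p'}\le 1-\tfrac{1}{p'}\|(X-Y)/2\|_{p'}^{p'}$, whereas you start from the parallelogram form and reach the same conclusion via the elementary estimate $1-\big(\tfrac{1+t}{2}\big)^2\le 1-t$ and Bernoulli's inequality; the sharpness discussion is likewise identical, reducing to the commutative case of \cite{CFL}.
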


\begin{proof} By (\ref{dual1}) and the choice of $\theta$,
${\displaystyle 1+  e^{i\theta}\tr[AB] = \tr [ (\mathcal{D}_{p'}(B) + e^{i\theta}A)B]}$.
Therefore, by H\"older's inequality and  the choice of $\theta$,
\begin{equation}\label{main10}
1 + \left|\tr[AB]\right|  \leq  \|{\mathcal D}_{p'}(B) +e^{i\theta}A\|_{p} \leq 2 \left\Vert \frac{{\mathcal D}_{p'}(B) +e^{i\theta}A}{2}\right\Vert_p \ .
\end{equation}
Now apply the optimal $2$-uniform convexity inequality \cite{BCL}, valid for $1 < p \leq 2$, and unit vectors $X,Y\in C_p$:
\begin{equation}\label{2u}
\left\Vert\frac{X+Y}{2}\right\Vert_p \leq 1 - \frac{p-1}{2}\left\Vert\frac{X-Y}{2}\right\Vert^2_p\ .
\end{equation}
This leads directly to  (\ref{main1}).  The proof of (\ref{main2}) is similar except that one uses
$$\left\Vert\frac{X+Y}{2}\right\Vert_p \leq 1 - \frac{1}{p}\left\Vert\frac{X-Y}{2}\right\Vert^p_p\ .$$
valid for $ 2 \leq p$, and unit vectors $X,Y\in C_p$ \cite{BCL}.  The fact that the exponents are the best possible follows from the fact that this is true in the commutative case, and the proof of this may be found in Theorem 3.1 of \cite{CFL}.
\end{proof}

\section{Application to entropy}

Recall that for $\alpha\in (0,1)$, the Renyi $\alpha$-relative entropy for $\rho$ with respect to $\sigma$
is the quantity
\begin{equation}\label{rendef1}
D_\alpha(\rho||\sigma) = \frac{1}{\alpha -1}\log\left(\tr[\rho^\alpha \sigma^{1-\alpha}]\right)\ .
\end{equation}
Recall also that 
$$\lim_{\alpha\to 1} D_\alpha(\rho||\sigma)   = D(\rho||\sigma)  := \tr[\rho (\log \rho - \log \sigma)]\ ,$$
the von Neumann relative entropy. Pinsker's inequality for the von Neumann relative entropy states that
\begin{equation}\label{pinsk}
D(\rho||\sigma)  \geq \frac12 \|\rho-\sigma\|_1^2\ .
\end{equation}

We now show that  Theorem~\ref{holder} gives a Pinsker type inequality for the Renyi entropy from which
(\ref{pinsk}) can be derived in the limit $\alpha \to 1$. By the definition (\ref{rendef1}), for $\alpha\in (0,1)$, 
entails an upper bound on $\tr[\rho^\alpha \sigma^{1-\alpha}]$ implies a lower bound on
$D_\alpha(\rho||\sigma)$.

\begin{thm}
Let $\rho$ and $\sigma$ be density matrices in $M_n$ for some $n$, and let $1 < p \leq 2$.
\begin{equation}\label{main40}
\tr[ \sigma^{1- 1/p} \rho^{1/p} ] \leq 1 - \frac{p-1}{4} \|\rho^{1/p} - \sigma^{1/p}\|_p^2\ ,
\end{equation}
and
\begin{equation}\label{main40B}
\tr[ \sigma^{1- 1/p} \rho^{1/p} ] \leq 1 - \frac{1}{p'\ 2^{p'-1}} \|\rho^{1/p'} - \sigma^{1/p'}\|_{p'}^{p'}\ .
\end{equation}
\end{thm}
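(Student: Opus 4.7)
The plan is to reduce this directly to Theorem~\ref{holder}. The natural choice is to set $A = \rho^{1/p}$ and $B = \sigma^{1/p'} = \sigma^{1-1/p}$. Both are positive operators, and because $\tr[\rho] = \tr[\sigma] = 1$, we have $\|A\|_p^p = \tr[\rho] = 1$ and $\|B\|_{p'}^{p'} = \tr[\sigma] = 1$, so $A$ is a unit vector in $C_p$ and $B$ is a unit vector in $C_{p'}$, as required. Moreover $\tr[AB] = \tr[\rho^{1/p}\sigma^{1-1/p}]$ is a nonnegative real number (trace of a product of positive operators viewed via cyclicity as $\tr[\sigma^{1/(2p')}\rho^{1/p}\sigma^{1/(2p')}]$), so we may take $\theta = 0$ in Theorem~\ref{holder}.

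Next I would compute the dual maps explicitly. Since $A \geq 0$, its polar decomposition is $A = I\cdot A$, i.e.\ $U = I$ and $|A| = A$ (on the support of $A$; any ambiguity on the kernel is irrelevant for $|A|^{p-1}U^*$). From the formula \eqref{dual1},
\begin{equation*}
\mathcal{D}_p(A) = \|A\|_p^{1-p}|A|^{p-1}U^* = \rho^{(p-1)/p} = \rho^{1/p'}\ ,
\end{equation*}
and analogously $\mathcal{D}_{p'}(B) = \sigma^{1/p}$. Substituting $e^{i\theta} = 1$, $A = \rho^{1/p}$ and $\mathcal{D}_{p'}(B) = \sigma^{1/p}$ into \eqref{main1} yields exactly \eqref{main40}; substituting into \eqref{main2} with $B = \sigma^{1/p'}$ and $\mathcal{D}_p(A) = \rho^{1/p'}$ yields exactly \eqref{main40B}.

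There is no real obstacle here: the identification of $A$ and $B$ as the natural pair whose product recovers $\sigma^{1-1/p}\rho^{1/p}$ is immediate, and the key point is that for positive operators the Mazur/gradient map $\mathcal{D}_p$ coincides with the functional calculus $X \mapsto X^{p-1}$ (after normalization), which means the remainder terms in Theorem~\ref{holder} automatically become differences of powers of $\rho$ and $\sigma$ in the right Schatten norms. The only minor subtlety worth flagging is the non-invertible case, where the polar decomposition is not unique; but since $|A|^{p-1}U^* = A^{p-1}$ on the support of $A$ and is multiplied by something vanishing on the kernel, all identities above go through unchanged.
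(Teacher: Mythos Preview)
Your proposal is correct and follows essentially the same approach as the paper: set $A=\rho^{1/p}$, $B=\sigma^{1/p'}$, observe that positivity gives $\theta=0$ and that the gradient maps reduce to $\mathcal{D}_{p'}(B)=\sigma^{1/p}$, $\mathcal{D}_p(A)=\rho^{1/p'}$, then read off \eqref{main40} and \eqref{main40B} from \eqref{main1} and \eqref{main2}. Your added remarks about nonnegativity of the trace and the non-invertible case are welcome clarifications but do not change the argument.
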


\begin{proof} Define
$A = \rho^{1/p}$ and $B = \sigma^{1/p'}$ so that $A$ and $B$ 
are unit vectors in $C_p$ and $C_{p'}$ respectively.  First note  that
${\mathcal D}_{p'}(B) = B^{1/(p-1)} = \sigma^{1/p}$. Hence (\ref{main40}) follows directly from (\ref{main1}).
Next,  note that ${\mathcal D}_p(A) = A^{p-1} = \rho^{1-1/p}$. 
Hence (\ref{main40B}) follows directly from (\ref{main2}).
\end{proof}

\begin{cl}\label{strpinren} For all $\alpha\in [1/2,1)$, 
\begin{equation}\label{main51}
D_\alpha(\rho||\sigma) \geq \frac{1}{4\alpha} \|\rho^\alpha - \sigma^\alpha\|_{1/\alpha}^2\ .
\end{equation}
\end{cl}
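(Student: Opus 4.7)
The plan is to reduce Corollary~\ref{strpinren} to the bound (\ref{main40}) by choosing the exponent $p$ correctly and then converting the resulting multiplicative inequality into a logarithmic one.

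First I would set $p = 1/\alpha$. Since $\alpha \in [1/2,1)$, this gives $p \in (1,2]$, which is exactly the range for which (\ref{main40}) applies. With this choice, $1 - 1/p = 1-\alpha$, so $\tr[\rho^\alpha \sigma^{1-\alpha}] = \tr[\sigma^{1-1/p}\rho^{1/p}]$, and (\ref{main40}) yields
\begin{equation*}
\tr[\rho^\alpha \sigma^{1-\alpha}] \leq 1 - \frac{p-1}{4}\|\rho^\alpha - \sigma^\alpha\|_{1/\alpha}^2 = 1 - \frac{1-\alpha}{4\alpha}\|\rho^\alpha-\sigma^\alpha\|_{1/\alpha}^2.
\end{equation*}

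Next I would take logarithms. Since $\alpha - 1 < 0$, the definition (\ref{rendef1}) rewrites as $D_\alpha(\rho\|\sigma) = -\frac{1}{1-\alpha}\log\tr[\rho^\alpha\sigma^{1-\alpha}]$. Writing $x = \frac{1-\alpha}{4\alpha}\|\rho^\alpha - \sigma^\alpha\|_{1/\alpha}^2$, the displayed inequality becomes $\tr[\rho^\alpha\sigma^{1-\alpha}] \leq 1 - x$, and since $x \in [0,1)$ (the trace is nonnegative) the elementary bound $-\log(1-x) \geq x$ gives
\begin{equation*}
D_\alpha(\rho\|\sigma) \;\geq\; \frac{1}{1-\alpha}\cdot\frac{1-\alpha}{4\alpha}\|\rho^\alpha - \sigma^\alpha\|_{1/\alpha}^2 \;=\; \frac{1}{4\alpha}\|\rho^\alpha-\sigma^\alpha\|_{1/\alpha}^2,
\end{equation*}
which is the claim.

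There is essentially no obstacle here: the corollary is a direct consequence of (\ref{main40}) together with the scalar inequality $-\log(1-x) \geq x$. The only points that require minor care are verifying that $p = 1/\alpha \in (1,2]$ lies in the range of validity of Theorem~\ref{holder} (which corresponds exactly to $\alpha \in [1/2,1)$, explaining the restriction on $\alpha$), and checking that $x < 1$ so that the logarithm inequality applies, which follows since $\rho, \sigma$ are density matrices and $\tr[\rho^\alpha\sigma^{1-\alpha}] > 0$.
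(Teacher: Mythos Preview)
Your argument is correct and follows exactly the paper's approach: set $p=1/\alpha$ and invoke (\ref{main40}). The only addition is that you make explicit the step $-\log(1-x)\geq x$ needed to pass from the trace bound to the lower bound on $D_\alpha$, which the paper leaves implicit in its one-line proof.
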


\begin{proof}
Take $p := 1/\alpha$ and $\alpha\in [1/2,1)$ so that $p\in (1,2]$. Then (\ref{main40}) yields 
(\ref{main51}).   
\end{proof}

We could of course use (\ref{main2}) to treat the cases $\alpha\in (0,1/2)$ in an analogous way; the result would be similar, but the exponent on the right would be $1/\alpha$ in place of $2$. 

Lower bounds on $D_\alpha(\rho||\sigma) $ in terms of $\|\rho-\sigma\|_1$ are known in the classical case, and easily generalize to the
quantum case, but these bounds are weaker than the bounds provided by Corollary~\ref{strpinren}.
It is known \cite{LG}  in the classical case
($\rho$ and $\sigma$ commuting) that
\begin{equation}\label{gr}
D_\alpha(\rho||\sigma) \geq \frac{\alpha}{2}\|\rho-\sigma\|_1^2\ .
\end{equation}
As a quite direct consequence of the Lieb Concavity Theorem \cite{wy} which says that
$(\rho,\sigma)\mapsto \tr[\rho^\alpha\sigma^{1-\alpha}]$ is concave for $\alpha\in [0,1]$, this is also valid in the
quantum case.  
Indeed, let $P$ be the projector onto
the range of $(\rho -\sigma)_+$. Let $U$ be any unitary that commutates with $P$.
Then
$$\tr[(U\rho U^*)^\alpha (U \sigma U^*)^{1-\alpha}] = \tr[\rho^\alpha\sigma^{1-\alpha}]\ .$$
By a theorem of Uhlmann \cite{U}, there is a finite set of such unitaries such that if we define 
$\widehat{\rho}$ and $\widehat{\sigma}$ be the averages of $U\rho U^*$ and 
$U \sigma U^*$ respectively over all  unitaries $U$ in our set, then  $\widehat{\rho}$ and $\widehat{\sigma}$
both belong to the algebra generated by $P$, and hence for some $p,q\in [0,1]$,
$$\widehat{\rho} = \frac{p}{\tr[P]}P +  \frac{(1-p)}{\tr[I-P]}(I-P) \qquad{\rm and}\qquad 
\widehat{\sigma} = \frac{q}{\tr[P]}P +  \frac{(1-q)}{\tr[I-P]}I-P\ .$$
Then the Lieb Concavity Theorem implies that
$\tr[\widehat{\rho}^\alpha \widehat{\sigma}^{1-\alpha}] \geq \tr[\rho^\alpha\sigma^{1-\alpha}]$.
Hence
$D_\alpha(\rho||\sigma) \geq D_\alpha(\widehat{\rho}||\widehat{\sigma})$.
However, since $\widehat{\rho}$ and $\widehat{\sigma}$ commute, the classical bound (\ref{gr}) applies to yield 
${\displaystyle D_\alpha(\widehat{\rho}||\widehat{\sigma}) \geq \frac{\alpha}{2}\|\widehat{\rho}-
\widehat{\sigma}\|_1^2}$, 
and one easily sees that $\|\rho-\sigma\|_1 = \|\widehat{\rho} -\widehat{\sigma}\|_1$. 
Hence, (\ref{gr}) is valid in the quantum setting as well.

We now show that (\ref{main51}) improves upon (\ref{gr}): The ratio of  
$\|\rho^{\alpha} - \sigma^{\alpha}\|_{1/\alpha}$ to 
$\|\rho - \sigma\|_1$ can be arbitrarily large, and that the 
ratio of $\|\rho - \sigma\|_1$  to 
  $\|\rho^{\alpha} - \sigma^{\alpha}\|_{1/\alpha}$ is bounded above by a finite constant. 
  
For the first of these points, an example suffices. Let $\epsilon \in (0,1/2)$, and define
$$\rho = \left[\begin{array}{cc} \tfrac12 & 0\\ 0 & \tfrac12\end{array}\right] 
\qquad{\rm and}\qquad 
\sigma =  \left[\begin{array}{cc} \tfrac12+\epsilon & 0\\ 0 & \tfrac12-\epsilon \end{array}\right] \ .$$
Then $\|\rho - \sigma\|_1 =2 \epsilon$, while 
$\|\rho^{\alpha} - \sigma^{\alpha}\|_{1/\alpha} = \left(2\alpha\epsilon\right)^{\alpha} + \mathcal{O}(\epsilon^{2\alpha})$. It follows that
\begin{equation}\label{comp1}
\frac{\phantom{1}\|\rho^{\alpha} - \sigma^{\alpha}\|_{1/\alpha}}{\|\rho - \sigma\|_1} =  \frac{\alpha^\alpha}{2\epsilon^{1-\alpha}} +
\mathcal{O}(\epsilon^{2\alpha-1})
\end{equation}
This shows that (\ref{main51}) can provide a much stronger bound than (\ref{gr}). 

Lemma 2.3 in \cite{ER} says (in particular) that for all positive $A,B\in M_n$, and all $\alpha \in (0,1)$,
\begin{equation}\label{ricard1}
\frac{\alpha}{3}\|A -B\|_1 \leq \|A^\alpha - B^\alpha\|_{1/\alpha} \max\{ \|A^\alpha\|_{1/\alpha}\ ,  
 \|B^\alpha\|_{1/\alpha} \}\ .
\end{equation}
Applying this with $A = \rho$ and $B = \sigma$, $(\alpha/3)\|\rho - \sigma\|_1 
\leq \|\rho^\alpha - \sigma^\alpha\|_{1/\alpha}$. Combining this with (\ref{main51}) 
yields, for $\alpha\in [1/2,1)$, 
\begin{equation}\label{main515}
D_\alpha(\rho||\sigma) \geq \frac{\alpha}{36} \|\rho - \sigma\|_1^2\ ,
\end{equation}
which is (\ref{gr}) apart from a constant that is worse by a factor of $18$. Thus apart from the constant, (\ref{main51})
implies (\ref{gr}). Part of the discrepancy in the constants is due to the constant in (\ref{ricard1}), but part also is due to the fact that we have not yet made optimal use of the uniform convexity bounds. 

As we now show, more can be gleaned from the argument
that we used to deduce a remainder term for H\"older's inequality from uniform convexity bounds. We now prove a variant of
Theorem~\ref{holder} and show that using this variant, we may obtain the full non-commutative Pinsker inequality; i.e.,
the $\alpha \to 1$ limit of (\ref{gr}), with the exact constants. This derivation shows that the sharp form of 
Pinsker's inequality is actually a fairly direct consequence of the uniform convexity properties of the $C_p$ spaces.

\section{Pinsker's inequality and uniform convexity}

Since
$$\lim_{\alpha\uparrow 1}D_\alpha(\rho||\sigma) = D(\rho||\sigma) = \tr[\rho(\log \rho - \log \sigma)]\ ,$$
taking the limit $\alpha\uparrow 1$ in  (\ref{strpinren}) yields
${\displaystyle D(\rho || \sigma) \geq \frac{1}{4} 
 \|\rho - \sigma\|_1^2}$. 
 This  is Pinsker's inequality \cite{Pin,R}, except that it is not in the sharp form which has a factor of $1/2$
 in place of the $1/4$ on the right, which is what one obtains from (\ref{gr}) in the limit $\alpha\uparrow 1$. 
 However, one can   recover the sharp form of Pinsker's inequality
 from the optimal $2$-uniform convexity inequality by going back to the proof 
 of  Theorem~\ref{holder} and noting that we gave something up arriving at (\ref{main10}) 
 by applying the usual H\"older inequality without taking the remainder into account. 
 
 \begin{defi} Let $\mathcal{P}$ be the set of functions $A(p)$ from $[1,2]$ into the positive $n\times n$ matrices 
 such that $\lim_{p\to 1}A(p) = A(1)$ in $C_1$ and such that $\|A(p)\|_p =1$ for each $p\in [1,2]$. 
 \end{defi}
 
 For example, let $\rho$ be any density matrix in $M_n$. Then $A(p) := \rho^{1/p} \in \mathcal{P}$. Moreover,
 if $A(p)$ and $B(p)$ belong to  $\mathcal{P}$, then so does $(A(p) + B(p))/\|A(p)+ B(p)\|_p$.

 \begin{thm}[Variant of H\"older's inequality with remainder]\label{holder2}
Let $1 < p\leq 2$.  Then for all $A(p)$ and $B(p)$ in  $\mathcal{P}$, and any constant $K < 1/2$,
\begin{equation}\label{main21}
\tr[A(p)B^{p-1}(p)] \leq 1 - K(p-1)\|A(p) - B(p)\|_p^2 + o(p-1)\ .
\end{equation}
\end{thm}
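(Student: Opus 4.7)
The strategy is to bootstrap Theorem~\ref{holder} itself. Applied with $(A,Y^{p-1})$ in place of $(A,B)$, it already gives (\ref{main21}) with the suboptimal constant $K_0=1/4$, and we feed this bound back into its own proof to steadily improve $K$. For $X,Y\in\mathcal{P}$, set $\phi(X,Y):=\tr[X\,Y^{p-1}]$. Since $Y$ is positive and $\|Y\|_p=1$, we have $\mathcal{D}_{p'}(Y^{p-1})=Y$, so (\ref{main1}) reads
\[
\phi(X,Y)\le 1-\tfrac{p-1}{4}\|X-Y\|_p^2\qquad\text{for all }(X,Y)\in\mathcal{P}^2.
\]
The plan is to prove by induction on $k\ge 0$ that (\ref{main21}) holds on $\mathcal{P}^2$ with constant $K_k:=\tfrac12-2^{-(k+2)}$. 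The base case $k=0$ is the display above, and since $K_k\nearrow 1/2$, for any prescribed $K<1/2$ it suffices to stop at the first $k$ with $K_k>K$.

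For the inductive step, fix $(A,B)\in\mathcal{P}^2$ and set $\tilde X:=(A+B)/\|A+B\|_p$, which again lies in $\mathcal{P}$ by the closure property noted right after its definition. Using $\tr[B\cdot B^{p-1}]=\|B\|_p^p=1$, the identity
\[
1+\phi(A,B)=\tr[(A+B)B^{p-1}]=\|A+B\|_p\,\phi(\tilde X,B)
\]
replaces a computation on $(A,B)$ by one on $(\tilde X,B)$ --- the bookkeeping device that unlocks the bootstrap. I would bound $\phi(\tilde X,B)$ by the inductive hypothesis with constant $K_k$, and bound $\|A+B\|_p\le 2-\tfrac{p-1}{4}\|A-B\|_p^2$ by the sharp $2$-uniform convexity (\ref{2u}) applied to the unit vectors $A$ and $B$. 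Multiplying the two bounds, discarding the $O((p-1)^2)$ cross term, and using that $A(p),B(p)\to A(1),B(1)$ in $C_1$ together with $\|A+B\|_p\to 2$ --- so that $\tilde X-B\to (A(1)-B(1))/2$ in $C_1$ and hence $\|\tilde X-B\|_p^2=\tfrac14\|A-B\|_p^2+o(1)$ --- produces
\[
\phi(A,B)\le 1-(p-1)\,\|A-B\|_p^2\!\left(\tfrac14+\tfrac{K_k}{2}\right)+o(p-1).
\]
Thus $K_{k+1}=\tfrac14+\tfrac{K_k}{2}$, the affine recursion with fixed point $1/2$ and explicit solution $K_k=\tfrac12-2^{-(k+2)}$.

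The main obstacle will be shepherding the $o(p-1)$ remainders through the iteration: each invocation of the inductive hypothesis at step $k+1$ injects an $o(p-1)$ depending on the pair $(\tilde X,B)$, and each substitution $\|\tilde X-B\|_p^2=\tfrac14\|A-B\|_p^2+o(1)$ contributes a further $(p-1)\cdot o(1)$ term. Because only finitely many iterations are needed to reach a given $K<1/2$, and $\mathcal{P}$ is stable under the map $(A,B)\mapsto(\tilde X,B)$, these all coalesce into a single $o(p-1)$ depending on the original pair $(A,B)$. The underlying analytic input is the continuity of $p\mapsto\|M(p)\|_p$ at $p=1$ whenever $M(p)\to M(1)$ in $C_1$, a standard estimate in finite dimensions.
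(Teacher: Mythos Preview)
Your proof is correct and follows essentially the same approach as the paper: the same bootstrap starting from $K_0=1/4$ (via Theorem~\ref{holder}), the same identity $1+\phi(A,B)=\|A+B\|_p\,\phi(\tilde X,B)$ with $\tilde X=(A+B)/\|A+B\|_p$, the same use of the $2$-uniform convexity bound on $\|A+B\|_p$, and the same recursion $K\mapsto \tfrac14+\tfrac{K}{2}$. Your explicit treatment of the $o(p-1)$ terms through finitely many iterations is, if anything, more careful than the paper's version.
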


\begin{cl}[Pinsker's Inequality for Density Matrices]
For all density matrices $\rho$ and $\sigma$ in $M_n$, 
$$D(\rho||\sigma) \geq \frac12 \|\rho-\sigma\|_1^2\ .$$
\end{cl}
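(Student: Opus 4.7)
The plan is to apply Theorem~\ref{holder2} with the choice $A(p) := \rho^{1/p}$ and $B(p) := \sigma^{1/p}$, and then extract the sharp Pinsker inequality by sending $p \downarrow 1$ and exploiting the arbitrariness of $K < 1/2$. Both $A(p)$ and $B(p)$ lie in $\mathcal{P}$: each is positive with $\|\rho^{1/p}\|_p = (\tr\rho)^{1/p} = 1$, and functional calculus yields $\lim_{p \to 1}\rho^{1/p} = \rho$ in $C_1$. With these choices $B(p)^{p-1} = \sigma^{1-1/p}$, so
$$\tr[A(p)\,B(p)^{p-1}] = \tr[\rho^{1/p}\sigma^{1-1/p}] = \exp\bigl((1/p - 1)\,D_{1/p}(\rho||\sigma)\bigr)$$
by the definition (\ref{rendef1}) of Renyi relative entropy at $\alpha = 1/p$.

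Applying Theorem~\ref{holder2} and taking logarithms yields
$$(1/p - 1)\,D_{1/p}(\rho||\sigma) \leq -K(p-1)\,\|\rho^{1/p} - \sigma^{1/p}\|_p^2 + o(p-1)\ .$$
Since $1/p - 1 = -(p-1)/p$ is negative, dividing reverses the inequality and gives
$$D_{1/p}(\rho||\sigma) \geq p\,K\,\|\rho^{1/p} - \sigma^{1/p}\|_p^2 + o(1)\ .$$
The next step is to send $p \downarrow 1$. The left-hand side converges to $D(\rho||\sigma)$ by the limiting relation $D_\alpha \to D$ as $\alpha \uparrow 1$ recalled at the start of Section~2.

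For the right-hand side it suffices to check that $\|\rho^{1/p} - \sigma^{1/p}\|_p \to \|\rho - \sigma\|_1$. Via the triangle inequality this reduces to two standard facts: first, $\|\rho^{1/p} - \rho\|_p \to 0$ (and likewise for $\sigma$), which follows from the functional calculus together with the equivalence of Schatten norms on the finite-dimensional space $M_n$; second, the elementary singular-value convergence $\bigl(\sum_j s_j^p\bigr)^{1/p} \to \sum_j s_j$ as $p \downarrow 1$, applied to $X = \rho - \sigma$. Combining these limits produces $D(\rho||\sigma) \geq K\,\|\rho - \sigma\|_1^2$, and since $K < 1/2$ was arbitrary, letting $K \uparrow 1/2$ delivers the sharp Pinsker constant.

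The only genuine technical step is verifying the continuity of $p \mapsto \|\rho^{1/p} - \sigma^{1/p}\|_p$ at $p = 1$, but in finite dimensions this is entirely routine. The conceptual content of Pinsker's inequality is already supplied by Theorem~\ref{holder2}: the sharp constant $1/2$ emerges from the optimal $2$-uniform convexity constant $(p-1)/2$ of $C_p$ after the logarithmic rescaling implicit in passing from $D_\alpha$ to $D$.
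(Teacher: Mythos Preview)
Your proof is correct and follows essentially the same approach as the paper: apply Theorem~\ref{holder2} with $A(p)=\rho^{1/p}$, $B(p)=\sigma^{1/p}$, rearrange, send $p\downarrow 1$, and then let $K\uparrow 1/2$. You simply spell out in more detail the passage through $D_{1/p}$ and the continuity of $p\mapsto\|\rho^{1/p}-\sigma^{1/p}\|_p$ at $p=1$, which the paper leaves implicit.
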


\begin{proof}  Take $A(p) - \rho^{1/p}$ and $B(p) = \sigma^{1/p}$.  By Theorem~\ref{holder2}, for all $K < 1/2$, 
$$\tr[ \sigma^{1- 1/p} \rho^{1/p} ] \leq 1 - K(p-1(\|\rho^{1/p} - \sigma^{1/p}\|_p^2 + o(p-1)\ .$$
Rearranging terms as above, and taking $p\to 1$, we obtain
$$D(\rho||\sigma) \geq K \|\rho-\sigma\|_1^2\ .$$
since $K < 1/2$ is can be arbitrarily close to $1/2$, the inequality is proved. 
\end{proof} 

\begin{proof}[Proof of Theorem~\ref{holder2}]  We note that for $C= 1/4$, 
(\ref{main21}) is valid by Theorem~\ref{holder}.  Next, supposing that  (\ref{main21}) is valid for some constant 
$K$, we show that it is also valid when $K$ is replaced by $(K+1/2)/2$. Iterating this yields the claimed result. 

Therefore, let us make the inductive assumption that  (\ref{main21}) is valid for some constant 
$K$.  We have
\begin{equation}\label{main31}
 1 + \tr \left[  B^{1-p}(p) A(p) \right] = \tr[B^{1-p}(p)(A(p) + B(p))] =  \tr[B^{1-p}(p)C(p))]\|A(p) + B(p)\|_p
 \end{equation}
where
$$C(p) = \|A(p) + B(p)\|_p^{-1}(A(p)+B(p)) \in \mathcal{P}\ .$$
By hypothesis,
$$\tr[B^{1-p}(p)C(p))]  \leq 1 - K(p-1)\|B(p) - C(p)\|_p^2 + o(p-1)\ ,$$
and since $\lim_{p\downarrow 1}\|A(p) + B(p)\|_p =2$, 
$$\|B(p) - C(p)\|_p = \left\Vert B(p) - \frac{A(p)+B(p)}{2}\right\Vert_p + o(1) = \frac12 \|A(p) - B(p)\|_p + o(1)\ .$$
Combining this with the previous bound,
\begin{equation}\label{main30}
\tr[B^{1-p}(p)C(p))]  \leq 1 - (p-1)\frac{K}{4} \|A(p) - B(p)\|_p^2 + o(1)\ .
\end{equation}
By the $2$-uniform convexity inequality, 
$$
\|A(p) + B(p)\|_p \leq 2- \frac{p-1}{4}\|A(p) - B(p)\|_p^2\ .$$
Using this and (\ref{main30}) in (\ref{main31}), we obtain
\begin{eqnarray}
 1 + \tr \left[  B^{1-p}(p) A(p) \right]  &\leq&  2 \left( 1 - \frac{p-1}{8}\|A(p) - B(p)\|_p^2\right)\left(
1 - (p-1)\frac{K}{4} \|A(p) - B(p)\|_p^2 + o(1)\right)\nonumber\\
&\leq& 2 - (p-1)\left(\frac14 + \frac{K}{2}\right) \|A(p) - B(p)\|_p^2 + o(p-1)\ .\nonumber
\end{eqnarray}
Thus, in (\ref{main21}), we may replace $K$ by $(K + 1/2)/2$, and the validity is maintained. 
\end{proof}

\noindent{\bf Acknowledgement} This results in this paper were obtained while the author was visiting at the I.M.A. in 
Minnesota during Spring 2015.

 \end{document}